\theoremstyle{definition}
\newtheorem{lemma}{Lemma}
\theoremstyle{definition}
\newtheorem{definition}{Definition}
\theoremstyle{definition}
\theoremstyle{definition}
\newtheorem*{theorem*}{Theorem}
\theoremstyle{definition}
\newtheorem{theorem}{Theorem}
\theoremstyle{definition}
\newcommand{\abs}[1]{|#1|}
\newcommand{\set}[1]{\left\{#1\right\}}
\newcommand{\ceiling}[1]{\lceil#1\rceil}
\newcommand{\Workers}{W}
\newcommand{\worker}[1]{w_{#1}}
\newcommand{\Firms}{F}
\newcommand{\firm}[1]{f_{#1}}
\newcommand{\match}{\mu}
\newcommand{\reduce}{\mathcal{R}}
\newcommand{\WithColor}[1]{#1} \newcommand{\WithoutColor}[1]{}       
\newcommand{\matchingC}[1]{yellow}
\begin{document}

\title{Note on the size of a stable matching\thanks{We thank Vince Crawford, Gabriel Facchini, Evan Piermont, Michael Richter, Frances Ruane, Joel Sobel, Ija Trapeznikova, and Gregory F.\ Veramendi for useful discussions. An associate editor and two anonymous referees made excellent suggestions that improved the paper considerably. We are indebted to them for suggesting the analysis of Section~\ref{sec:relationship}.}}
\author{
Gregory Z. Gutin\footnote{Computer Science Department, Royal Holloway University of London.} \hspace{.1in} Philip R.\ Neary\footnote{Economics Department, Royal Holloway University of London.} \hspace{.1in} Anders Yeo\footnote{IMADA, University of Southern Denmark.} $^{,}$\footnote{Department of Mathematics, University of Johannesburg.}
}

\date{\today}

\maketitle

\begin{abstract}
\noindent
Consider a one-to-one two-sided matching market with workers on one side and single-position firms on the other, and suppose that the largest individually rational matching contains $n$ pairs.
We show that the number of workers employed and positions filled in every {stable matching} is bounded from below by $\ceiling{\frac{n}{2}}$ and we characterise the class of preferences that attain the bound.
We then identify the minimum number of {\it equilibrium pairings} that must be ``sacrificed'' when maximising the employment rate is the objective; if each stable matching is of size $\ceiling{\frac{n}{2}}$, then no such pairs appear when all vacancies are filled.
\end{abstract}


\newpage

\section{Introduction}\label{sec:intro}

Our starting point in this paper is a one-to-one two-sided matching market, {\`a} la \cite{GaleShapley:1962:AMM}, with ``workers'' on one side and single-position ``firms'' on the other.\footnote{While we employ labour market terminology (see the classic papers of \cite{CrawfordKnoer:1981:E} and \cite{KelsoCrawford:1982:E}), the set up is applied widely and we could have labelled the two sides ``students'' and ``colleges'', or ``men'' and ``women'', etc. \cite{Roth:2015:} is a popular book that details many environments to which the framework applies.}
The only assumption made on preferences is that the largest individually rational matching is of size $n$.
That is, there exists an assignment in which each of the $n$ matched workers prefers their position to being unemployed and each of the $n$ firms that has its post filled prefers its employee over leaving the position empty.
Given this, we ask:\ what fraction of workers are assured of employment in equilibrium? 

Our first result is that the number of those employed in equilibrium can never go below $50\%$ of what is attainable.
In other words, every stable matching is at least half the size of a maximum matching:

\begin{theorem}\label{thm:main}
Fix a one-to-one two-sided matching market with an individually rational matching of size $n$.
Then, every stable matching has at least $\ceiling{\frac{n}{2}}$ pairs.
\end{theorem}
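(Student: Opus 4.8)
The plan is to fix an arbitrary stable matching $\mu$ together with an individually rational matching $M$ consisting of exactly $n$ pairs, and to bound the number of pairs in $\mu$ from below by a direct counting argument. I would deliberately avoid comparing the two matchings pair-by-pair or tracking the symmetric difference $\mu \triangle M$; the entire argument rests on a single blocking-pair observation followed by a disjointness count, and needs no structural machinery such as the Rural Hospitals Theorem.

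The key step is the following claim: for every pair $(w,f) \in M$, at least one of $w$ and $f$ is matched under $\mu$. To establish it I would argue by contradiction, supposing that both $w$ and its $M$-partner $f$ are unmatched in $\mu$. Since $(w,f)$ belongs to the individually rational matching $M$, worker $w$ finds $f$ acceptable and firm $f$ finds $w$ acceptable; under the standard strict-preference convention this means $f \succ_w \emptyset$ and $w \succ_f \emptyset$. But in $\mu$ both agents are unmatched, so each strictly prefers the other to its current (null) assignment, and hence $(w,f)$ blocks $\mu$, contradicting stability. This claim is the conceptual heart of the proof. I expect the only point requiring care to be the precise statement of individual rationality and of the blocking condition (in particular, the role of strict preferences): it is this strictness that upgrades ``acceptable'' to a genuine strict improvement and thereby produces the blocking pair. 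No real obstacle arises beyond getting these definitions stated cleanly.

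Given the claim, the count is immediate. The $n$ pairs of $M$ are vertex-disjoint, involving $n$ distinct workers and $n$ distinct firms. For each such pair the claim furnishes at least one endpoint that $\mu$ matches, and because the pairs are vertex-disjoint these chosen endpoints are pairwise distinct across pairs. Hence $\mu$ matches at least $n$ agents in total. Since a one-to-one matching with $k$ pairs matches exactly $2k$ agents, we obtain $2\lvert\mu\rvert \ge n$, so $\lvert\mu\rvert \ge n/2$, and as $\lvert\mu\rvert$ is an integer this gives $\lvert\mu\rvert \ge \lceil n/2 \rceil$. Since $\mu$ was an arbitrary stable matching, the theorem follows.
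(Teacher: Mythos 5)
Your proof is correct and is essentially the paper's own argument: the core observation in both is that a pair of the individually rational matching with both endpoints unmatched in the stable matching would form a blocking pair, contradicting stability. The only difference is presentational --- the paper runs this as a contradiction from $\abs{\mu^*} < \ceiling{\frac{n}{2}}$, while you phrase it as a direct count ($2\abs{\mu} \ge n$ since each of the $n$ disjoint pairs contributes a distinct $\mu$-matched endpoint), which is the same argument in contrapositive form.
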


Formulating the stable matching problem using undirected graphs facilitates a short proof.\footnote{This formulation appears in, for example, Chapter 1 of the \citeyear{EcheniqueImmorlica:2023:} edited volume ``Online and Matching-Based Market Design'' (\citeauthor{EcheniqueImmorlica:2023:}).}
Fix a bipartite graph, $G = (W, F; E)$, where $W$ and $F$ are the set of workers and the set of firms respectively, and the edge set, $E$, represents the acceptable pairs (i.e., for each edge, both parties prefer to be matched to each other over being unmatched).
Finally, assume that each vertex has a strict preference ranking over each of its neighbours.

\begin{proof}
Confirming Theorem~\ref{thm:main} requires showing that (i) a stable matching is a maximal matching, and (ii) a maximal matching is at least half the size of a maximum matching.\footnote{Both statements are likely folklore, but since we are unable to find a reference for either we include a proof of each.}

\begin{enumerate}[label=(\roman*)]
\item
A stable matching is maximal:\\
Let $\mu^*$ denote a stable matching, and suppose, towards a contradiction, that $\mu^*$ is not maximal.
Given this, there exists an edge in the graph $G$ with neither end-point in $\mu^*$.
But because any such edge corresponds to an acceptable pair, call it $(w, f)$, where both parties are unmatched in $\mu^*$.
The pair $(w, f)$ is therefore blocking for $\mu^*$ and thus contradicts that $\mu^*$ is stable.

\item
A maximal matching is at least half the size of a maximum matching:\\
Let $\mu_{\max}$ denote a maximum matching of size $n$, let $\mu$ denote a (different) maximal matching, and suppose that $\abs{\mu} < \ceiling{\frac{n}{2}}$.
For every edge $(w, f)$ in $\mu$, there exists at least one and at most two edges in $\mu_{\max}$ that have $w$ or $f$ as an end-point.
Therefore, $2\abs{\mu} < \abs{\mu_{\max}}$ means that there is some edge in $\mu_{\max}$ such that neither end-point is an end-point to an edge in $\mu$, which contradicts that $\mu$ is maximal.
\end{enumerate}
\end{proof}

The above proof is straightforward but not overly informative.
While a simple example shows that the bound is tight,\footnote{We are grateful to an anonymous referee for providing the following example. To show the result for any even $n$, begin with $n = 2$. Preference lists are ordered such that the most favoured partners appear first and so on, and missing entries indicate unacceptable partners. They are as follows. $\worker{1}: (\firm{1}, \firm{1}); \, \worker{2}:(\firm{1}); \, f_1 : (\worker{1}, \worker{2}); \, \firm{2}: (w_1)$. The only stable matching is of size 1 and pairs worker $\worker{1}$ with $\firm{1}$, but there is an individually rational matching of size two given by $\set{(\worker{1}, \firm{2}), (\worker{2}, \firm{1})}$. This works for $n=2$, but note that disjoint copies of the example work for any even $n$. For $n$ odd, simply use the ``even construction'' from above up to $n-1$, and then add one more worker $w^*$ and one more firm $f^*$ such that the pair $(w^*, f^*)$ is the only acceptable pair for each.} a natural question remains unanswered:~what is the structure of preferences that attain it?

Theorem~\ref{thm:characterize3} provides the answer.
All preferences that attain the bound share a property:~amongst those participants that do not appear in an individually rational matching, and therefore also not in any stable one, at least one must find the other unacceptable.
If not, then there would be an acceptable pair amongst them which would increase the size of a feasible assignment.
One natural (sub)class of preferences that attain the bound also have a feature that could be termed ``agreement on those at the top''.
That is, every worker that appears in a stable matching ranks the firms that do not appear in a stable matching below the firms that do (with a mirror-image statement holding for each firm that appears in a stable matching).

It is worth emphasising that Theorem~\ref{thm:main} does not guarantee that the equilibrium {\it employment rate}, the number of workers with jobs as a percentage of the total labour force, must be at least 50\%.
Far from it.
Indeed, Theorem~\ref{thm:characterize3}'s characterisation illustrates how any two-sided market could, in theory, be extended indefinitely so that the equilibrium employment rate can be taken arbitrarily close to zero even when there unemployed workers and unfilled jobs remain.
This kind of market failure will maintain as long as either unemployed workers prefer unemployment over any of the empty vacancies or the opposite (or both).

Our final result explores the overlap in the set of pairs that appear in some stable matching, that we term {\it equilibrium pairings}, and the collection of pairs that appear in a maximum (individually rational) matching.
The precise statement, Theorem~\ref{thm:overlap}, shows that the relationship is monotonic (in fact linear):\ when a stable matchings is as small as can be relative to a maximum matching $\big($$=$$\ceiling{\frac{n}{2}}$, where $n$ is the size of a maximum matching$\big)$, no stable pairings appear in a maximum matching; conversely, when a stable matching is maximum, all equilibrium pairings may appear.
In particular, it allows one to identify a lower bound on the number of {equilibrium pairings} that must be ``sacrificed'' if maximising the employment rate is the objective.
This confirms an additional cost of filling all vacancies that occurs at a more granular level than simply the reduction in the number of worker-firm pairs.

In a big picture sense, our analysis is about the cost of stability, i.e., what demands stability places relative to what is possible without insisting on it. 
That being said, the measures we employ are the size of a matching --- the number of worker-firm pairs --- and the number of equilibrium pairings.
One thing our analysis is silent on is precisely how well off are the market participants that are paired, as in more standard measures like {\it the price of anarchy} and {\it the price of stability} that were applied to matching markets in \cite{BoudreauKnoblauch:2013:TD,BoudreauKnoblauch:2014:MSS}.\footnote{The price of anarchy and the price of stability were first introduced for non-cooperative games in \cite{KoutsoupiasPapadimitriou:1999:} and \cite{SchulzStier:2003:SODA} respectively.}
Increasing the size of a matching or breaking up equilibrium pairings may means that more participants fare better than their unmatched option, but if it comes at the cost of ``happier pairings'' then the effect on overall welfare is unclear.
What kind of trade offs exist between these two objectives seems worthy of exploration, but we leave this to future work.

Before beginning the paper proper, it is worth reiterating that our analysis holds only for the simplest of matching environments:~one-to-one two-sided markets.
Whether our results carry over to other, richer matching environments seems an interesting avenue to pursue but is beyond the scope of this short paper.\footnote{Other environments include one-sided markets, the so-called room-mate problem, and many-to-one two-sided markets with ``traditional'' preferences and ``richer preferences'' (for example, there may exist couples with shared preferences \citep{Roth:1984:JPE,KlausKlijn:2005:JET,KlausKlijn:2009:JET}, schools may have preferences over collections of students \citep{BiroHassidim:2025:RESTAT}, and so on). A variant on our question of interest can be asked by comparing the surplus in equilibrium with the total surplus attainable in the classic assignment game of \citeauthor{ShapleyShubik:1971:IJGT}.}

Section~\ref{sec:notation} defines the environment and fixes notation.
Despite having stated Theorem~\ref{thm:main} in the language of undirected graphs, in Section~\ref{sec:notation} we change tack and adopt the directed graph (digraph) formulation of the stable matching problem first proposed in \cite{Maffray:1992:JCTB}.
Our reason for settling on this approach is that it provides a nice visual aid to illustrate preferences that attain the bound.
In Section~\ref{sec:characterisation}, we characterise this class using digraphs.
In Section~\ref{sec:relationship}, we revert to the undirected graph formulation and illustrate the relationship in the maximum ``overlap'' in pairs that appear in a stable matching and a maximum matching.


\section{Preliminaries}\label{sec:notation}

The set up is the original one-to-one two-sided matching model of \citeauthor{GaleShapley:1962:AMM} enriched in two ways:\ (i) there need not be the same number of participants on each side of the market, and (ii) there may be those who prefer to be unmatched over being paired with some participants.

Let $W$ denote the finite set of {\it workers} and let $F$ denote the finite set of single-position {\it firms}.
The sets $W$ and $F$ are disjoint.
Each market participant has a complete, transitive, and strict preference ordering over some nonempty subset of those on the other side of the market.

A {\it matching} is a one-to-one mapping $\mu$ from $W \cup F$ to itself, such that:\ $\mu(w) = f$ if and only if $\mu(f) = w$, in which case we will say that $w$ is {\it matched} to $f$; if $\mu(w)$ is not in $F$, then $\mu(w) = w$, in which case $w$ is {\it unmatched}; if $\mu(f)$ is not in $W$, then $\mu(f) = f$, in which case $f$ is said to be {\it unmatched}.

A matching is {\it individually rational} if no participant prefers to be unmatched over who they are matched to.
We say that the pair $(w,f)$ in $W \times F$ is {\it acceptable} if $w$ and $f$ prefer each other over being unmatched, and we assume that every market participant belongs to at least one acceptable pair.
A matching $\mu$ is {\it stable} if it is individually rational and there is no {\it blocking pair:}\ an acceptable pair $(w, f)$ such that worker $w$ is either unmatched or prefers firm $f$ to $\mu(w)$ and firm $f$ is either unmatched or prefers $w$ to $\mu(f)$.

Given a matching market, we define the associated {\it matching digraph}, $D = \big(V, A)$, where the vertex set $V$ is the collection of acceptable pairs and the arc set $A$ is defined as follows.\footnote{The digraph terminology we employ is the same as that in \cite{GutinNeary:2023:GEB,GutinNeary:2024:GEB}.}
\begin{center}
$\begin{array}{rcl}
  A_{\Workers} & := & \{(\worker{}, \firm{i}) (\worker{}, \firm{j}) \; | \; \worker{} \text{ prefers } \firm{j} \text{ to } \firm{i} \} \\
  A_{\Firms}  & := &  \{ (\worker{i}, \firm{}) (\worker{j}, \firm{}) \; | \; \firm{} \text{ prefers } \worker{j} \text{ to } \worker{i} \} \\
  A & := & A_{\Workers} \cup A_{\Firms} \\
\end{array}$
\end{center}

A {\it matching} appears in $D$ as an {\it independent set} of vertices, i.e., no two vertices contain the same worker or the same firm.
A {\it stable matching} appears in $D$ as a matching such that for every vertex $(\worker{},\firm{}) \in V$, either $(\worker{},\firm{})$ belongs to the matching or $(\worker{},\firm{})$ has an out-neighbour that does.

A matching is {\it maximal} if it is not a subset of any other matching.
A {\it maximum} matching is at least as large as every other matching.
It is almost immediate that a stable matching is maximal (since otherwise there must be an acceptable pair $(w, f)$ with both $w$ and $f$ unmatched).
At its core, our motivating question --- recall the opening paragraph --- concerns the potential difference in size between a (maximal) stable matching and a maximum one.


\section{The characterisation}\label{sec:characterisation}

Let $n$ be an integer greater than or equal to 2, and let us restrict attention to two-sided markets with $n$ workers, $n$ firms, and a stable matching of size $n$.
We note that this requirement is without loss of generality because the {\em normal form} of any matching market is always balanced (i.e., contains the same number of workers and firms) and contains only those participants that are matched in every stable matching.\footnote{Knowledge of the normal form strengthens the rural hospitals theorem \citep{McVitieWilson:1970:BIT,Roth:1984:JPE,Roth:1986:E}:~not only does it identify which participants appear in every stable matching, it also singles out many pairings that cannot be part of any stable matching amongst those that are always matched.}

The normal form is arrived at by running the {\em iterated deletion of unattractive alternatives} (IDUA) procedure \citep{BalinskiRatier:1997:,GutinNeary:2023:GEB}, a reduction that strips a matching market of all participants that do not appear in any stable matching.
The procedure works off the observation that some pairs in a matching market are never a barrier to stability.

\begin{definition}\label{def:reduce}
Given a matching digraph $D(P)$, we define $\reduce\big(D(P)\big)$ as the subdigraph of $D(P)$ with the following vertices (and any arcs incident on them) deleted.
\begin{enumerate}[label=(\roman*)]
\item\label{item:reduce1}
If no arc in $A_{\Firms}$ leaves $(\worker{}, \firm{}) \in V(D(P))$,  then delete all vertices
$(\worker{}, \firm{i})$ such that $(\worker{}, \firm{i}) (\worker{}, \firm{})  \in A_{\Workers}$.
\item\label{item:reduce2}
If no arc in  $A_{\Workers}$ leaves $(\worker{}, \firm{}) \in V(D(P))$,  then delete all vertices
$(\worker{i}, \firm{})$ such that $(\worker{i}, \firm{}) (\worker{}, \firm{})  \in A_{\Firms}$.
\end{enumerate}
\end{definition}

Consider part \ref{item:reduce1} of Definition~\ref{def:reduce}.
This simply notes that if worker $\worker{}$ is the favourite worker of firm $\firm{}$ then there cannot be a stable matching in which $\worker{}$ is paired with a firm that $\worker{}$ prefers less than $\firm{}$.
The reason is immediate:\ $(\worker{}, \firm{})$ would be a blocking pair against any such matching.
As a consequence of this, all pairs involving worker $\worker{}$ and a firm that it ranks below $\firm{}$ can be removed from consideration.
In \cite{GutinNeary:2023:GEB}, we referred to such pairs as {\em unattractive alternatives}. 
Part \ref{item:reduce2} is an analogous condition for firms.

The operation described in Definition~\ref{def:reduce} reduces the size of the matching digraph leaving behind a new smaller one.\footnote{Just as deleting a strictly dominated strategy from a strategic game removes all strategy profiles containing that strategy, deleting a vertex (pair) from a matching market removes all matchings that contain that pair.}
Furthermore, in fact almost by definition, the set of stable matchings has not changed.

Another key point is that there may exist vertices that are not removed by applying $\reduce$ to $D(P)$, but which are removed when applying $\reduce$ to $\reduce\big(D(P)\big)$.
That is:\ iterate.\footnote{This has parallels with how deleting dominated strategies in strategic games can render some originally undominated strategies dominated. In Section 2.3 of \cite{GutinNeary:2023:GEB} we detail some possible connections between the two concepts.}
Formally, the iterated application of $\reduce$ to $D(P)$ yields the following procedure.

\begin{definition}[The iterated deletion of unattractive alternatives (IDUA)]\label{def:IDUA}
Given a matching digraph $D(P) = \big(V, A_{\Workers} \cup A_{\Firms}\big)$, define $D^{0} = D(P)$, and for each $k \geq 1$, define $D^{k} = \reduce(D^{k-1})$. Finally, define the {\em normal form of} $P$, $D^{*}(P)$, as $D^{k}$ with minimum $k$ such that $D^{k} = \reduce(D^{k}),$ i.e., no further reductions will take place for such $D^{k}.$
\end{definition}

Lemma~2 in \cite{GutinNeary:2023:GEB} shows that $D^{*}(P)$ is uniquely defined which means that the order in which vertices satisfying the condition are chosen and subsequently deleted does not impact the final outcome.
That is, the normal form is uniquely defined.\footnote{The iterated deletion of strictly dominated strategies has the same property (in finite games), though the iterated deletion of weakly dominated strategies does not. See, for example, the game on page 1015 of \cite{KohlbergMertens:1986:E}.} 
Moreover, since applying $\reduce$ once does not change the set of stable matchings, repeated application of $\reduce$ doesn't either.
Given this, the set of stable matchings for the original market, $D(P)$, is unchanged and so can be computed from its normal form $D^*(P)$.

Below we define a class of matching markets, denoted by ${\cal G}_n$, that we show is precisely the class of markets that contain a stable matching of size $n$.
The procedure begins with a balanced matching market with a stable matching that is maximum and, from there, extends the preferences in a a manner such that the IDUA procedure of Definition~\ref{def:IDUA} would prune what is added.
That is, keeping the size of a stable matching held fixed, the extension allows for a larger individually rational matching.
Although, as per Theorem~\ref{thm:main}, this holds only up to a point.

\begin{definition}[The class of two-sided markets, ${\cal G}_n$]\label{def:Gn}
Let $P$ be any matching market with a set of $n$ workers, $W$, and a set of $n$ firms, $F$, that contains a stable matching, $\match$, of size $n$.
Add any number of new workers $W'$ and any number of new firms $F'$.
Now, for any $(w,f) \in \match$ and $(w',f)$ where $w'$ is in $W'$, we add an arc from $(w',f)$ to $(w,f)$ (that is, $f$ prefers $w$ over $w'$).
And for any $(w,f) \in \match$ and $(w,f')$ where $f'$ is in $F'$, we add an arc from $(w,f')$ to $(w,f)$ (that is, $w$ prefers $f$ over $f'$).
Finally, there are no vertices $(w',f')$ where $w' \in W'$ and $f' \in F'$, and all preferences that are not specified by $P$ or the rules above can be made arbitrarily.

Now, consider the class of matching markets generated by the above procedure for all markets with $n$ workers, $n$ firms, that contain a stable matching of size $n$.
This defines the class ${\cal G}_n$. 

\end{definition}

We now illustrate how the construction laid out in Definition~\ref{def:Gn} operates using an example.
To keep things as simple as possible, we choose $n=2$, and begin with a market, call it $P$, that has two workers, $w_1$ and $w_2$, and two firms, $f_1$ and $f_2$.
We assume that all parties view those on the other side as acceptable, and we further assume that both workers and both firms have opposite preference in that $w_1$ prefers firm $f_2$ over $f_1$ whereas worker $w_2$ ranks them differently, with a similar statement holding for the firms.
Preferences of this nature possess a {\em cycle} \citep{Chung:2000:GEB}, that appear in a matching digraph as a directed cycle.
The left hand side of Figure~\ref{fig:Nhalf2} presents the digraph of $P$ where workers index rows and firms index columns so that the vertex $(i,j)$ in the figure refers to the pair $(w_i, f_j)$.
There are two stable matchings both of size two, $\{(\worker{1},\firm{1}),(\worker{2},\firm{2})\}$ and $\{(\worker{1},\firm{2}),(\worker{2},\firm{1})\}$, with the worker-optimal one given by the {yellow} vertices and the firm-optimal one by the {blue} vertices.

Using $P$ as the launch point, we now construct a larger, four-by-four market, $R$.
Specifically, to $P$ we add two additional workers, $w_3$ and $w_4$, and two additional firms, $f_3$ and $f_4$, with preferences enriched in a way that adheres to the rules described in Definition~\ref{def:Gn}.
In the new, larger market, $R$, with digraph given in Figure~\ref{fig:Nhalf2}, the size of a stable matching has not changed as it remains at 2, whereas a new individually rational matching of size 4 now exists (given by the collection of red vertices).

\begin{figure}[h!]
\begin{center}

\tikzstyle{vertexDOT}=[scale=0.23,circle,draw,fill]
\tikzstyle{vertexY}=[circle,draw, top color=gray!10, bottom color=gray!40, minimum size=11pt, scale=0.5, inner sep=0.99pt]
\tikzstyle{vertexW}=[circle,draw, top color=gray!1, bottom color=gray!1, minimum size=11pt, scale=0.5, inner sep=0.99pt]
\tikzstyle{vertexQ}=[circle,dotted, draw, top color=gray!1, bottom color=gray!1, minimum size=11pt, scale=0.55, inner sep=0.99pt]

\WithColor{
\tikzstyle{vertexZ}=[circle,draw, top color=\matchingC{}!20, bottom color=\matchingC{}!50, minimum size=11pt, scale=0.5, inner sep=0.99pt]
\tikzstyle{vertexWF}=[circle,draw, top color=blue!70, bottom color=yellow!100, minimum size=11pt, scale=0.5, inner sep=0.99pt]
\tikzstyle{vertexF}=[circle,draw, top color=blue!50, bottom color=blue!50, minimum size=11pt, scale=0.5, inner sep=0.99pt]
\tikzstyle{vertexR}=[circle,draw, top color=red!20, bottom color=red!60, minimum size=11pt, scale=0.5, inner sep=0.99pt]
}

\WithoutColor{
\tikzstyle{vertexZ}=[rectangle,draw, top color=gray!1, bottom color=gray!1, minimum size=21pt, scale=0.6, inner sep=2.99pt]
\tikzstyle{vertexWF}=[circle,draw, top color=blue!70, bottom color=yellow!100, minimum size=11pt, scale=0.5, inner sep=0.99pt]
\tikzstyle{vertexF}=[rectangle,draw, top color=gray!80, bottom color=gray!50, minimum size=21pt, scale=0.6, inner sep=2.99pt]
\tikzstyle{vertexR}=[rectangle,draw, top color=black!70, bottom color=gray!70, minimum size=21pt, scale=0.6, inner sep=2.99pt]
}

\begin{tikzpicture}[scale=0.55]
\node [scale=1.2] at (2,6) {$P$};
\node [scale=1.2] at (2,4) {  };

\node [scale=0.9] at (1,12.5) {$f_1$};
\node [scale=0.9] at (3,12.5) {$f_2$};

\node [scale=0.9] at (-1,11) {$w_1$};
\node (x11) at (1,11) [vertexF] {$(1,1)$};
\node (x12) at (3,11) [vertexZ] {$(1,2)$};

\node [scale=0.9] at (-1,9) {$w_2$};
\node (x21) at (1,9) [vertexZ] {$(2,1)$};
\node (x22) at (3,9) [vertexF] {$(2,2)$};


\draw [->,line width=0.03cm] (x11) -- (x12);


\draw [->,line width=0.03cm] (x22) -- (x21);


\draw [->,line width=0.03cm] (x21) -- (x11);
\draw [->,line width=0.03cm] (x12) -- (x22);
\end{tikzpicture} \hspace{3cm}
\begin{tikzpicture}[scale=0.55]
\node [scale=1.2] at (7,4) {$R$};

\node [scale=0.9] at (-1,11) {$w_1$};
\node [scale=0.9] at (-1,9) {$w_2$};
\node [scale=0.9] at (-1,6) {$w_3$};
\node [scale=0.9] at (-1,4) {$w_4$};

\node [scale=0.9] at (1,12.5) {$f_1$};
\node [scale=0.9] at (3,12.5) {$f_2$};
\node [scale=0.9] at (6,12.5) {$f_3$};
\node [scale=0.9] at (8,12.5) {$f_4$};

\node (x11) at (1,11) [vertexY] {$(1,1)$};
\node (x12) at (3,11) [vertexZ] {$(1,2)$};
\node (x14) at (8,11) [vertexR] {$(1,4)$};

\node (x21) at (1,9) [vertexZ] {$(2,1)$};
\node (x22) at (3,9) [vertexY] {$(2,2)$};
\node (x23) at (6,9) [vertexR] {$(2,3)$};
\node (x24) at (8,9) [vertexY] {$(2,4)$};

\node (x32) at (3,6) [vertexR] {$(3,2)$};
\node (x41) at (1,4) [vertexR] {$(4,1)$};


\draw [->,line width=0.03cm] (x11) to [out=30,in=150]  (x14);
\draw [->,line width=0.03cm] (x14) -- (x12);


\draw [->,line width=0.03cm] (x22) -- (x23);
\draw [->,line width=0.03cm] (x23) -- (x24);
\draw [->,line width=0.03cm] (x24) to [out=210,in=330]  (x21);

\draw [->,line width=0.03cm] (x21) -- (x11);
\draw [->,line width=0.03cm] (x12) -- (x22);
\draw [->,line width=0.03cm] (x32) to [out=60,in=300]  (x12);
\draw [->,line width=0.03cm] (x41) -- (x21);

\draw [->,line width=0.03cm] (x14) -- (x24);

\end{tikzpicture}
\end{center}

\caption{A two-sided matching market, $P$,  with a stable matching of size two (depicted \WithColor{by the yellow vertices}\WithoutColor{by the white rectangles}) and a matching market, $R$, with a stable matching of size two and a matching of size 4 (depicted \WithColor{by the red vertices}\WithoutColor{by the dark rectangles}). Note that 
$\{(1,1),(2,2)\}$ is a stable matching in $P$ but is not in $R$.
All horizontal and vertical arcs that are implied by transitivity of preferences have been omitted for readability.}
\label{fig:Nhalf2}
\end{figure}

While the extensions permitted by Definition~\ref{def:Gn} will not alter the size of a stable matching, it is worth highlighting that certain extensions can impact the set of stable matchings.
For example, in Figure~\ref{fig:Nhalf2} the matching $\{(1,1),(2,2)\}$ was a stable matching in $P$ but is no longer stable when $P$ is extended to $R$.

We now show that the class ${\cal G}_n$ defined in Definition~\ref{def:Gn} precisely characterises all matching markets that contain a stable matching of size $n$.

\begin{lemma}\label{thm:characterize2}
A stable matching problem, $R$, contains a stable matching of size $n$ if and only if $R \in {\cal G}_n$.
\end{lemma}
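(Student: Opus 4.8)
The plan is to prove the two implications separately, working throughout with the digraph characterisation of stability recalled just before the statement: a matching (independent set) $\mu$ is stable precisely when every vertex $(w,f)$ of the matching digraph either lies in $\mu$ or has an out-neighbour in $\mu$. The sufficiency direction ($R\in\mathcal{G}_n \Rightarrow$ size-$n$ stable matching) is the routine one; the necessity direction ($R$ has a size-$n$ stable matching $\Rightarrow R\in\mathcal{G}_n$) is where the real content lies, since I must reconstruct a legitimate base market $P$ and show that every mandated arc of Definition~\ref{def:Gn} is in fact forced.

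For sufficiency, let $R\in\mathcal{G}_n$ arise from a base market $P$ (with $n$ workers, $n$ firms, and stable matching $\mu$ of size $n$) together with added sets $W'$ and $F'$. I would claim that $\mu$ remains stable in $R$ and check domination vertex-type by vertex-type. Vertices $(w,f)$ with $w\in W$, $f\in F$ are exactly the vertices of $P$'s digraph, and their relevant arcs persist in $R$, so each is in $\mu$ or dominated by $\mu$ as it was in $P$. Vertices $(w',f)$ with $w'\in W'$ are dominated because the construction inserts an arc from $(w',f)$ to the pair $(w,f)\in\mu$ at firm $f$ (which exists since $\mu$ saturates all $n$ firms); symmetrically every $(w,f')$ with $f'\in F'$ points to $(w,f)\in\mu$. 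Finally there are no vertices $(w',f')$ to worry about. Hence every vertex is in $\mu$ or dominated, so $\mu$ is a stable matching of size $n$ in $R$.

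For necessity, fix a stable matching $\nu$ of size $n$ in $R$ and let $W^\ast,F^\ast$ be the $n$ workers and $n$ firms it saturates, with $W'=W\setminus W^\ast$ and $F'=F\setminus F^\ast$ the unmatched remainder. I would take $P$ to be $R$ restricted to $W^\ast\cup F^\ast$ with inherited preferences, and verify the four features demanded by the template. First, $\nu$ is stable in $P$: any pair blocking $\nu$ inside $P$ would also block $\nu$ in $R$, contradicting stability, and $\nu$ is individually rational; so $P$ is a legitimate base with a size-$n$ stable matching. Second, there is no acceptable pair $(w',f')$ with $w'\in W'$, $f'\in F'$, for such a pair would join two unmatched agents and block $\nu$. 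Third, for $(w,f)\in\nu$ and any acceptable $(w',f)$ with $w'\in W'$: since $w'$ is unmatched and finds $f$ acceptable, non-blocking of $\nu$ forces $f$ to prefer its partner $w$ to $w'$ (strictness rules out indifference), i.e.\ exactly the arc from $(w',f)$ to $(w,f)$ required by the construction; the argument for $(w,f')$ with $f'\in F'$ is symmetric. All remaining preference data is unconstrained, which is precisely what Definition~\ref{def:Gn} permits, so $R$ is produced by the template from $(P,W',F',\nu)$ and $R\in\mathcal{G}_n$.

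The main obstacle I anticipate is entirely in the necessity direction, and it is conceptual rather than computational: choosing the correct partition (reading the base $P$ off the given stable matching $\nu$) and then translating the single hypothesis ``$\nu$ has no blocking pair'' into each of the \emph{oriented} arc conditions of the definition, being careful that strictness of preferences pins down the direction. A secondary point worth stating explicitly is that the ``arbitrary'' clause of Definition~\ref{def:Gn} absorbs every preference comparison not covered by the three forced conditions, so no further verification is needed; noting this closes the argument cleanly.
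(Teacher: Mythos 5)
Your proof is correct and takes essentially the same route as the paper's: the necessity direction reads the base market $P$ off the given size-$n$ stable matching and derives each forced arc (and the absence of vertices $(w',f')$) directly from the no-blocking-pair condition, while sufficiency observes that the base stable matching remains stable after the extension since neither old pairs nor pairs involving new agents can block it. Your write-up is somewhat more explicit than the paper's (e.g., verifying that the restriction to matched agents is a legitimate base market, and checking domination vertex-type by vertex-type), but the decomposition and key observations coincide.
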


\begin{proof}
Let $R$ be any two-sided matching market that contains a stable matching, $\match$, of size $n$.
Let $W_R$ denote the workers in $R$ and let $F_R$ denote the firms in $R$.
Let $W_\match$ be the workers and $F_\match$ the firms that are matched in $\match$.
Let $W'=W_R \setminus W_\match$ and let $F'=F_R \setminus W_\match$.
{Finally, denote by $D_R = \big(V(D_R), A(D_R)\big)$ the digraph constructed from market $R$.}

First we show that for every $(w,f) \in \match$ and every $f' \in F'$, either we have that $(w,f') \not\in V(D_R)$ or we have that $(w,f')(w,f) \in A(D_R)$.
For the sake of contradiction, assume that $(w,f') \in V(D_R)$ and  $(w,f)(w,f') \in A(D_R)$.
However, this implies that $(w,f')$ is a blocking pair for $\match$ (as $w$ prefers $f'$ over $f$ and $f'$ is not matched with any worker).
So, for every $(w,f) \in \match$ and every $f' \in F'$, we either have $(w,f') \not\in V(D_R)$ or $(w,f')(w,f) \in A(D_R)$.
An analagous argument shows that for every $(w,f) \in \match$ and every $w' \in W'$, we have either $(w',f) \not\in V(D_R)$ or $(w',f)(w,f) \in A(D_R)$.

Clearly $(w',f') \not\in V(D_R)$ for all $w' \in W'$ and $f' \in F'$, because such a vertex would be a blocking pair for $\match$, a contradiction to $\match$ being a stable matching in $R$. Therefore $R \in {\cal G}_n$.

We now show that every matching market in ${\cal G}_n$ has a stable matching of size $n$. 
Let $R \in {\cal G}_n$ be arbitrary.
Let $\match$ be the stable matching of $P$, which had size $n$, when we constructed $R$.
Since no pair in $P$ is a blocking pair for $\match$ and no pair including a worker or firm not in $P$ is a blocking pair for $\match$, it must be that $\match$ is a stable matching for $R$.
Therefore, $R$ has a stable matching of size $n$ as desired.
\end{proof}

If $n$ is odd, then $n/2$ is not an integer and subtlety arises because this cannot be the size of a stable matching.
However, Lemma~\ref{thm:characterize2} can be adapted as follows.
Let $n$ be an integer, odd or even, and define a class of markets, ${\cal F}_n$, as all markets in ${\cal G}_{\lceil n/2 \rceil}$ with a matching of size $n$.
In other words ${\cal F}_n$, is defined as follows.

\begin{definition}[The class of two-sided markets, ${\cal F}_n$]\label{def:Fn}
Let $P$ be any two sided market with $\lceil n/2 \rceil$ workers and $\lceil n/2 \rceil$ firms containing a stable matching, $\match$, of size $\lceil n/2 \rceil$.
Add at least $\lfloor n/2 \rfloor$ new workers $W'$ and at least $\lfloor n/2 \rfloor$ new firms $F'$ such that there is a matching of size $n$ in
the resulting market. 
Furthermore, for any $(w,f) \in S$ and $(w',f)$ where $w'$ is in $W'$, we add an arc from $(w',f)$ to $(w,f)$ (that is, $f$ prefers $w$ over $w'$).
Analogously, for any $(w,f) \in S$ and $(w,f')$ where $f'$ is in $F'$, we add an arc from $(w,f')$ to $(w,f)$ (that is, $w$ prefers $f$ over $f'$).
There are no vertices $(w',f')$ where $w' \in W'$ and $f' \in F'$ and all preferences that are not specified by $I$ or the rules above can be made arbitrarily.

Now, consider the class of matching markets generated by the above procedure for all markets with $n$ workers, $n$ firms, that contain a stable matching of size $n$.
This defines the class ${\cal F}_n$. 
\end{definition}

The following shows that the class of preferences ${\cal F}_n$ is precisely that which define matching markets where stability and efficiency --- that is, efficiency as measured by the size of largest individually rational matching --- are as far apart as possible.

\begin{theorem}\label{thm:characterize3}
A matching market, $R$, contains a matching of size $n$ and a stable matching of size $\lceil n/2 \rceil$ if and only if $R \in {\cal F}_n$.
\end{theorem}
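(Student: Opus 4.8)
The plan is to reduce Theorem~\ref{thm:characterize3} to Lemma~\ref{thm:characterize2} by treating the two defining conditions of ${\cal F}_n$ separately. Recall that, by definition, $R \in {\cal F}_n$ means precisely that $R \in {\cal G}_{\lceil n/2 \rceil}$ \emph{and} $R$ admits a matching of size $n$. Lemma~\ref{thm:characterize2}, applied with $\lceil n/2 \rceil$ in place of $n$, states that $R \in {\cal G}_{\lceil n/2 \rceil}$ if and only if $R$ contains a stable matching of size $\lceil n/2 \rceil$. So the theorem should fall out by substituting this equivalence into the definition of ${\cal F}_n$.

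For the forward implication I would assume $R$ has both a matching of size $n$ and a stable matching of size $\lceil n/2 \rceil$. The latter places $R$ in ${\cal G}_{\lceil n/2 \rceil}$ by Lemma~\ref{thm:characterize2}; combined with the former, the definition of ${\cal F}_n$ gives $R \in {\cal F}_n$. For the converse I would unpack $R \in {\cal F}_n$ into its two defining clauses: the clause $R \in {\cal G}_{\lceil n/2 \rceil}$ yields, again by Lemma~\ref{thm:characterize2}, a stable matching of size $\lceil n/2 \rceil$, while the second clause is literally the matching of size $n$. Both desired conditions then hold, completing the equivalence.

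A couple of points warrant care rather than genuine difficulty. First, the statement speaks of ``a matching of size $n$''; in the graph formulation every matching consists of acceptable pairs and is therefore automatically individually rational, so this coincides with the hypothesis of Theorem~\ref{thm:main} and there is no gap in moving between the two formulations. Second, one should note (via the rural hospitals theorem cited in the Remark) that all stable matchings share a common size, so ``a stable matching of size $\lceil n/2 \rceil$'' is equivalent to ``every stable matching has size $\lceil n/2 \rceil$''; together with the lower bound $\lceil n/2 \rceil$ supplied by Theorem~\ref{thm:main}, this confirms that membership in ${\cal F}_n$ captures exactly the tight case of that bound.

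The only real obstacle I anticipate is bookkeeping around the definition itself: one must check that the terse definition of ${\cal F}_n$ (markets in ${\cal G}_{\lceil n/2 \rceil}$ possessing a matching of size $n$) agrees with the expanded ``in other words'' construction, in particular that demanding a matching of size $n$ forces at least $\lfloor n/2 \rfloor$ added workers and $\lfloor n/2 \rfloor$ added firms (since the core $P$ supplies only $\lceil n/2 \rceil$ of each and there are no acceptable pairs between a new worker and a new firm), and that small cases where $\lceil n/2 \rceil$ falls below the range for which ${\cal G}$ is defined are handled separately. None of this affects the core argument, which is a direct application of Lemma~\ref{thm:characterize2}.
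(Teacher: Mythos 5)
Your proposal is correct and follows essentially the same route as the paper's own proof: both directions are handled by unpacking the definition of ${\cal F}_n$ as ``${\cal G}_{\lceil n/2 \rceil}$ plus a matching of size $n$'' and invoking Lemma~\ref{thm:characterize2} with $\lceil n/2 \rceil$ in place of $n$. The additional care you flag (individual rationality of matchings in the graph formulation, consistency of the two phrasings of the definition) is reasonable but does not change the argument, which matches the paper's.
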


\begin{proof}
By Lemma~\ref{thm:characterize2}, we note that every market in ${\cal F}_n$ contains a stable matching of size $\lceil n/2 \rceil$.
And by construction, all markets in ${\cal F}_n$ also contain a matching of size $n$. 

Now let $R$ be any market with a stable matching of size $\lceil n/2 \rceil$ and a matching of size $n$. 
By Lemma~\ref{thm:characterize2} we note that $R \in {\cal G}_{\lceil n/2 \rceil}$. This implies that
$R \in {\cal F}_n$ as $R$ contains a matching of size $n$. 
\end{proof}

We conclude this section by describing a natural class of preferences that satisfy the criteria.
This class could be described as having ``agreement at the top''.
By this we mean the following.
Let $W^*$ and $F^*$ be the set of workers and the set of firms that appear in a market's {\em normal form} and hence in every stable matching.
And suppose that all pairs in $W^* \times F^*$ are acceptable.
Now, if a worker $w$ in $W^*$ deems a newly added firm $f$ as acceptable, then $w$ also ranks $f$ below all firms in $F^*$.
A similar but opposite statement holds for newly added workers and how firms in $F^*$ rank them.

To give an example, let us extend the market $P$ in Figure~\ref{fig:Nhalf2} to market $R'$.
(This extension will be different as what was done to generate the market $R$ in Figure~\ref{fig:Nhalf2}.)
To generate the market $R'$ in Figure~\ref{fig:amend}, for all new workers and new firms, we insist that they are ranked by those who appear in every stable matching below these participants least favourite pairing over all stable matchings
This can be seen by observing that from every new vertex (pair) in the directed graph of $R$, there is an arc into the least preferred participant in $P$.

Note further that this additional requirement will not alter the set of stable matchings.
That is, the normal form of the market $P$ and the market $R'$ will be same.

Finally, note that more workers and more firms could be added without changing the set of stable matchings nor the size of a maximum matching.
This will happen as long as all new workers are ranked by firms $f_1$ and $f_2$ below $w_1$ and $w_2$, with a similar condition for the newly added firms.

\begin{figure}[h!]
\begin{center}

\tikzstyle{vertexDOT}=[scale=0.23,circle,draw,fill]
\tikzstyle{vertexY}=[circle,draw, top color=gray!10, bottom color=gray!40, minimum size=11pt, scale=0.5, inner sep=0.99pt]
\tikzstyle{vertexW}=[circle,draw, top color=gray!1, bottom color=gray!1, minimum size=11pt, scale=0.5, inner sep=0.99pt]
\tikzstyle{vertexQ}=[circle,dotted, draw, top color=gray!1, bottom color=gray!1, minimum size=11pt, scale=0.55, inner sep=0.99pt]

\WithColor{
\tikzstyle{vertexZ}=[circle,draw, top color=\matchingC{}!20, bottom color=\matchingC{}!50, minimum size=11pt, scale=0.5, inner sep=0.99pt]
\tikzstyle{vertexWF}=[circle,draw, top color=blue!70, bottom color=yellow!100, minimum size=11pt, scale=0.5, inner sep=0.99pt]
\tikzstyle{vertexF}=[circle,draw, top color=blue!50, bottom color=blue!50, minimum size=11pt, scale=0.5, inner sep=0.99pt]
\tikzstyle{vertexR}=[circle,draw, top color=red!20, bottom color=red!60, minimum size=11pt, scale=0.5, inner sep=0.99pt]
}

\WithoutColor{
\tikzstyle{vertexZ}=[rectangle,draw, top color=gray!1, bottom color=gray!1, minimum size=21pt, scale=0.6, inner sep=2.99pt]
\tikzstyle{vertexWF}=[circle,draw, top color=blue!70, bottom color=yellow!100, minimum size=11pt, scale=0.5, inner sep=0.99pt]
\tikzstyle{vertexF}=[rectangle,draw, top color=gray!80, bottom color=gray!50, minimum size=21pt, scale=0.6, inner sep=2.99pt]
\tikzstyle{vertexR}=[rectangle,draw, top color=black!70, bottom color=gray!70, minimum size=21pt, scale=0.6, inner sep=2.99pt]
}

\begin{tikzpicture}[scale=0.55]
\node [scale=1.2] at (2,6) {$P$};
\node [scale=1.2] at (2,4) {  };

\node [scale=0.9] at (-1,11) {$w_1$};
\node (x11) at (1,11) [vertexF] {$(1,1)$};
\node (x12) at (3,11) [vertexZ] {$(1,2)$};

\node [scale=0.9] at (-1,9) {$w_2$};
\node (x21) at (1,9) [vertexZ] {$(2,1)$};
\node (x22) at (3,9) [vertexF] {$(2,2)$};

\node [scale=0.9] at (1,12.5) {$f_1$};
\node [scale=0.9] at (3,12.5) {$f_2$};


\draw [->,line width=0.03cm] (x11) -- (x12);


\draw [->,line width=0.03cm] (x22) -- (x21);


\draw [->,line width=0.03cm] (x21) -- (x11);
\draw [->,line width=0.03cm] (x12) -- (x22);
\end{tikzpicture} \hspace{3cm}
\begin{tikzpicture}[scale=0.55]
\node [scale=1.2] at (7,4) {$R'$};

\node [scale=0.9] at (1,12.5) {$f_1$};
\node [scale=0.9] at (3,12.5) {$f_2$};
\node [scale=0.9] at (6,12.5) {$f_3$};
\node [scale=0.9] at (8,12.5) {$f_4$};

\node [scale=0.9] at (-1,11) {$w_1$};
\node (x11) at (1,11) [vertexF] {$(1,1)$};
\node (x12) at (3,11) [vertexZ] {$(1,2)$};
\node (x13) at (6,11) [vertexY] {$(1,3)$};
\node (x14) at (8,11) [vertexR] {$(1,4)$};

\node [scale=0.9] at (-1,9) {$w_2$};
\node (x21) at (1,9) [vertexZ] {$(2,1)$};
\node (x22) at (3,9) [vertexF] {$(2,2)$};
\node (x23) at (6,9) [vertexR] {$(2,3)$};
\node (x24) at (8,9) [vertexY] {$(2,4)$};

\node [scale=0.9] at (-1,6) {$w_3$};
\node (x31) at (1,6) [vertexY] {$(3,2)$};
\node (x32) at (3,6) [vertexR] {$(3,2)$};

\node [scale=0.9] at (-1,4) {$w_4$};
\node (x41) at (1,4) [vertexR] {$(4,1)$};
\node (x42) at (3,4) [vertexY] {$(4,2)$};

\draw [->,line width=0.03cm] (x11) -- (x12);
\draw [->,line width=0.03cm] (x14) to [out=150,in=30]  (x11);


\draw [->,line width=0.03cm] (x23) -- (x22);
\draw [->,line width=0.03cm] (x24) -- (x23);

\draw [->,line width=0.03cm] (x21) -- (x11);
\draw [->,line width=0.03cm] (x12) -- (x22);
\draw [->,line width=0.03cm] (x22) -- (x21);
\draw [->,line width=0.03cm] (x32) to [out=60,in=300]  (x12);
\draw [->,line width=0.03cm] (x42) to (x32);
\draw [->,line width=0.03cm] (x31) -- (x21);
\draw [->,line width=0.03cm] (x41) -- (x31);

\draw [->,line width=0.03cm] (x13) -- (x14);
\draw [->,line width=0.03cm] (x14) -- (x24);

\end{tikzpicture}
\end{center}

\caption{On the left is the two-sided matching market, $P$, from Figure~\ref{fig:Nhalf2}. On the right is a matching market, $R'$, with a stable matching of size two and a matching of size 4 (depicted \WithColor{by the red vertices}\WithoutColor{by the dark rectangles}). Note that both stable matchings in $P$ are also stable matchings in $R'$. As in Figure~\ref{fig:Nhalf2}, arcs implied by transitivity have been omitted to avoid clutter.}
\label{fig:amend}
\end{figure}


\section{Stable pairings and maximum employment\protect\footnote{We thank two anonymous referees for suggesting the analysis of this section.}}\label{sec:relationship}

In the market with extended preferences in the right hand panel Figure~\ref{fig:Nhalf2}, none of the pairs contained in a stable matching also appeared in the maximum matching (the same holds for the extended market in Figure~\ref{fig:amend}).
Is this an artefact of this particular market, or can something general be said?

It turns out that there is a simple, albeit we believe somewhat surprising, relationship between the overlap in pairs that appear in (maximal) stable matchings and maximum matchings.
Theorem~\ref{thm:overlap} makes this precise.
We first state and prove the result, and we then discuss its consequences.

\begin{theorem}\label{thm:overlap}
Given a matching market $P$, let $\mu^*$ denote a stable matching, let
$\mu$ denote any matching (not necessarily stable), and let $r$ denote the number of pairs that $\mu^*$ and $\mu$ have in common.
Then $|\mu^*| \geq \frac{|\mu|+r}{2}$.
\end{theorem}

\begin{proof}
Let $M=\mu^* \cap \mu$.
That is, $M$ contains the $r$ pairs that $\mu^*$ and $\mu$ have in common.
Remove all workers and firms in $M$ and call the remaining market $P'$.
Then the remaining pairs in $\mu^* \setminus M$ form a stable matching in $P'$ and
$\mu \setminus M$ forms a matching in $P'$.
By Theorem~\ref{thm:main} we note that
$|\mu^* \setminus M| \geq \ceiling{\frac{|\mu \setminus M|}{2}}$.
This implies that the following holds.

\[
|\mu^*|-r \geq \left\lceil \frac{|\mu|-r}{2} \right\rceil  \geq \frac{|\mu|-r}{2}
\]

This in turn implies that $|\mu^*| \geq \frac{|\mu|+r}{2}$ as desired.
\end{proof}

By rearranging the inequality in Theorem~\ref{thm:overlap}, we get that $r \leq 2|\mu^*| - |\mu|$, thereby providing an upper bound on the number of equilibrium pairings that can appear in $\mu$.
While the result holds for any matching $\mu$, when $|\mu|$ is maximum (i.e., $\mu$ is a largest individually rational matching), we can identify the minimum number of equilibrium pairings that must be ``sacrificed'' if maximising the employment rate is the objective.
In particular, when the difference in size between a stable matching and an individually rational matching is maximised, and as per Theorem~\ref{thm:main} this occurs when $|\mu| = 2|\mu^*|$, it must be that no equilibrium pairings appear in a maximum matching.

The above confirms that there can be an additional {\it cost of full employment} that occurs at a more granular level:~not only can a demand of filling all vacancies come at the cost of stability, it can also require forfeiting some equilibrium pairings.



\clearpage

\bibliographystyle{plainnat}
\bibliography{matchSize.bib}


\end{document}